\newcommand{\E}{\mathbb{E}}
\newcommand{\cD}{\mathcal{D}}
\newcommand{\cI}{\mathcal{I}}
\def \cA {\ensuremath{\mathcal{A}}\xspace}
\def \ALG {\ensuremath{\operatorname{ALG}}\xspace}
\def \OPT {\ensuremath{\operatorname{OPT}}\xspace}
\def \OPT {\ensuremath{\operatorname{OPT}}\xspace}
\newcommand{\xhdr}[1]{\vspace{1mm} \noindent{\textbf{#1}}}
\def \ao {\ensuremath{\operatorname{AO}}\xspace}
\def \ro {\ensuremath{\operatorname{RO}}\xspace}
\def \kd {\ensuremath{\operatorname{KD}}\xspace}
\def \ombm {\ensuremath{\operatorname{OMBM}}\xspace}
\def \lb {\ensuremath{\mathsf{LB}}\xspace}
\def \rg {\ensuremath{\operatorname{RG}}\xspace}
\def \dg {\ensuremath{\operatorname{DG}}\xspace}
\newcommand{\cR}{\ensuremath{\mathcal{R}}\xspace}
\newcommand{\cro}{\ensuremath{\mathsf{CR}}\xspace}
\newcommand{\ie}{\emph{i.e.,}\xspace}
\newcommand{\eg}{\emph{e.g.,}\xspace}
\newtheorem{theorem}{Theorem}[section]
\newtheorem{lemma}[theorem]{Lemma}
\newtheorem{example}{Example}[section]
\theoremstyle{remark}
\theoremstyle{definition}
\newlength\figureheight 
\newlength\figurewidth
\newcounter{linecntr}
\newcommand{\ordinalline}{%
  \stepcounter{linecntr}%
  \begingroup\edef\x{\endgroup\noexpand\linelabel{linecntr-\thelinecntr}}\x%
  \textcolor{red}{\ordinalstringnum{\getrefnumber{linecntr-\thelinecntr}}}}
\def\notes{0}
\newcommand{\cnote}[3]{%
  \if\notes1%
     \stepcounter{linecntr}%
     \linelabel{linecntr-\thelinecntr}%
     {\marginpar{\framebox[1.5in]{\color{#1}\scriptsize \sf \raggedright\parbox[t]{1.4in}{{\bf{#2 (sec \thetitle, line \getrefnumber{linecntr-\thelinecntr}):}} #3}}}}%
  \fi}
\def\BState{\State\hskip-\ALG@thistlm}
\newcommand{\sig}{\sigma}
\definecolor{gray}{rgb}{0.5,0.5,0.5}
\newcommand{\hide}[1]{}
   \newcommand\figcaption{\def\@captype{figure}\caption}
   \newcommand\tabcaption{\def\@captype{table}\caption}
\def\qed{\ifmmode$\blacksquare$\else{\unskip\nobreak\hfil
\penalty50\hskip1em\null\nobreak\hfil$\blacksquare$
\parfillskip=0pt\finalhyphendemerits=0\endgraf}\fi\vspace{0.3cm}}
\newcommand{\alp}{\alpha}
\newcommand{\cF}{\mathcal{F}}
\def\tsc#1{\csdef{#1}{\textsc{\lowercase{#1}}\xspace}}
\begin{document}
\let\WriteBookmarks\relax
\def\floatpagepagefraction{1}
\def\textpagefraction{.001}
\title [mode = title]{Online minimum matching with uniform metric and random arrivals}              
\author[1]{Sharmila V.\ S.\ Duppala}
\ead{sduppala@umd.edu}
\author[2]{Karthik A.\ Sankararaman}
\ead{karthikabinavs@gmail.com}    
\author[3]{Pan Xu}
\ead{pxu@njit.edu}    

\address[1]{University of Maryland, College Park, USA}
\address[2]{Meta AI, Menlo Park, USA}
\address[3]{New Jersey Institute of Technology, Newark,  USA}
\cortext[3]{Correspondence to: Pan Xu, 4310 Guttenberg Information Technologies Center, New Jersey Institute of Technology, Newark,  USA. }

\begin{abstract}
    We consider Online Minimum Bipartite Matching under the uniform metric. We show that Randomized Greedy achieves a competitive ratio equal to {$(1+1/n) (H_{n+1}-1)$}, which matches the lower bound. Comparing with the fact that RG achieves an optimal ratio of $\Theta(\ln n)$ for the same problem but under the adversarial order, we find that the weaker arrival assumption of random order doesn't offer any extra algorithmic advantage for RG, or make the model strictly more tractable.
    \end{abstract}
\begin{keywords}
Online minimum matching  \sep Uniform metric space \sep Random arrival order
\end{keywords}

\maketitle

\section{Introduction}
In the past decade, online-matching based models have seen wide applications in rideshare (\eg matching drivers and riders in Uber) and crowdsourcing markets (\eg pairing workers and task in Amazon Mechanical Turk). In many of these real-world applications, we often need to find a matching between two disjoint sets such that the total distance over all matches is minimized. A motivating example can be seen in the online food ordering platform like Grubhub: we often need to find a matching between workers and online orders with the least total distance such that that the total waiting time of all users is minimized (here we assume for simplicity that each worker can handle one order only).

Online Minimum Bipartite Matching (\ombm) was proposed by~\cite{khuller1990line} and it has proved a powerful model in aforementioned applications.  The basic setting is as follows. Let $[n]=\{1,2,\ldots,n\}$ for any generic integer $n$. Suppose we have two disjoint sets of points $L$ and $R$  in a metric space with metric $d(\cdot, \cdot)$ and each set consists of $n$ points (\ie $|L|=|R|=n$). Let us use $i \in L$ and $j \in R$ to index points in the two respective sets, and $d_{ij}\doteq d(i,j)$ for all $i \in L, j \in R$. Points in $L$ are known in advance while points in $R$ arrive sequentially  in an online fashion: upon the arrival of each $j \in R$, we have to match it with a point  $i \in L$, and it incurs a cost of $d_{ij}$. Note that each point in $L$ can be matched only once. The goal is to design a matching algorithm such that the total cost is minimized. Throughout this paper, we use the two terms ``cost'' and ``distance'' interchangeably.

Here are several variants in the metric definition and arrival setting. There are three common metrics studied in the literature, \textbf{line metric} where all points are required to be in a line; \textbf{uniform metric} where $d(\cdot, \cdot)$ takes only binary values, and \textbf{general metric} (no special requirements). As for the arrival setting, there are three well-studied assumptions, namely, adversarial order (\ao), random order (\ro), and known distributions (\kd). For \ao,  both of the set $R$ and the arrival order of $R$ are fixed and unknown to the algorithm. For \ro, the set of $R$ is unknown but the arrival order of $R$ is required to be a random permutation over $R$. For \kd, $R$ is known in advance, and during each time a single point from $R$ will be sampled (or arrive) with replacement according to a certain distribution known as prior. A special case of \kd is called KIID, where arrival distributions over all rounds are known, identical and independent. In the following, we discuss a common measurement, called Competitive Ratio (\cro), to evaluate the performance of online algorithms.

\xhdr{Competitive ratio} (\cro). Consider a given instance $\cI=(L,R, d, \cA)$ of \ombm where $d$ is the metric and $\cA$ specifies the arrival setting.  Consider an online minimization problem as studied here for example. 
Let $\ALG(\cI)=\E_{S \sim \cA} [\ALG(S)]$ denote the expected performance (\ie the total cost) of $\ALG$ on the input $\cI$, where the expectation is taken over the potential randomness in the arrival sequence $S$ and that inherent in \ALG.  Let $\OPT(\cI)=\E_{S \sim \cA}[\OPT(S)]$ denote the expected \emph{offline optimal} cost, where $\OPT(S)$ refers to the minimum cost of a matching on the bipartite graph $(L,S)$ \emph{after} observing the full arrival sequence $S$ from $R$. Then, the competitive ratio of $\ALG$ is defined as $\cro(\ALG)=\max_{\cI} \frac{\ALG(\mathcal{I})}{\OPT(\mathcal{I})}$. 

\xhdr{Related work}. There is a long line of research on \ombm under various settings, see the summary of results in Table~\ref{table:1}. Here are a few notations. Let $\cD$ and $\cR$ be the set of all deterministic and randomized algorithms. Among all algorithms in $\cD$ and $\cR$, two algorithms are studied most intensively: deterministic greedy (\dg) and randomized greedy (\rg). Generally, \dg always try to assign an arrival point of $j \in R$ to an unmatched point $i \in L$ with the least cost $d_{ij}$. \rg shares the same spirit with \dg and the key difference between the two lies in the way of breaking ties when multiple optimal choices are available: \dg breaks ties in an arbitrary fixed order while \rg always samples an optimal choice uniformly. For either $\cD$ and $\cR$, let $\lb(\cD)=\inf_{\ALG \in \cD} \cro(\ALG)$ and $\lb(\cR)=\inf_{\ALG \in \cR} \cro(\ALG)$, which denote the respective competitive ratios achieved by an optimal deterministic and randomized algorithm. For two functions $f$ and $g$ over $n$, we write $f\sim g$ if $f/g \rightarrow 1$ when $n \rightarrow \infty$. In other words, $f$ and $g$ are asymptotically equal including the constants, which is a stronger notion than $\Theta(\cdot)$.  Let $H_n=\sum_{k=1}^n 1/k \sim \ln n$.

\begin{table*}[h!]
\centering
\begin{tabular}{|c|c|c|c|} \hline
  & Line Metric  &  Uniform Metric & General Metric  \\
  \hline 
 \multirow{3}{*}{Adversarial Order}  
 & 
 $\cro(\dg)=\Omega(2^n)$~\cite{meyerson2006randomized}
  &
 $\lb(\cD)=\cro(\dg)=n$~\cite{meyerson2006randomized}
 &
 $\lb(\cD)=2n-1$~\cite{raghvendra2016robust,khuller1990line,kalyanasundaram1993online}
  \\ 
  &   $\lb(\cD)=O(\ln n)$~\cite{raghvendra2018optimal}
  &
   $\lb(\cR)=\cro(\rg) = \Theta(\ln n)$~\cite{meyerson2006randomized}
  &
  $\lb(\cR)=O(\ln^3 n)$~\cite{meyerson2006randomized}
    \\  
  &
  &
  &
  $\lb(\cR)=\tau(n), \lb(\cR)=O(\ln^2 n)$~\cite{bansal14}    
    \\ \hline 
  \multirow{2}{*}{Random Order} 
 &  
 $\cro(\dg)=\Omega(n^{0.292})$~\cite{gairing2019greedy}. 
  &  
 $\cro(\rg) \sim \ln n$~(\textbf{This paper})    
 & 
 $\lb(\cR) \sim 2 \ln n$~\cite{raghvendra2016robust}.     
  \\  
 &
  $\cro(\dg)=O(n)$~\cite{gairing2019greedy}
 &
$\lb(\cR) \sim \ln n$~(\textbf{This paper}) 
 & 
  \\ \hline
\end{tabular}
\caption{Summary of related works, where $\cD$ and $\cR$ denote the collections of all possible deterministic and randomized algorithms, respectively; $\dg$ and $\rg$ denote deterministic and randomized Greedy, respectively. }
\label{table:1}
\end{table*}

Here are elaborations on the results shown in Table~\ref{table:1}. We first focus on studies under the general metric. The work of~\cite{khuller1990line} and \cite{kalyanasundaram1993online} are among the first studies for \ao: each independently gave an optimal deterministic $2n-1$-competitive algorithm. Recently,~\citet{raghvendra2016robust} presented a  primal-dual-based deterministic algorithm that achieves almost the same ratio ($2n-1+o(1)$) for \ao.  What's more, they showed the same algorithm achieves an optimal competitive ratio of $2H_n-1-o(1)$ under \ro. For~\ao,~\citet{meyerson2006randomized}  gave the first greedy-based randomized algorithm, which has a sublinear competitive ratio of $O(\ln^3 n)$, which was then improved to $O({\ln^2{n}})$ by \cite{bansal14} who also gave a lower bound of $O({\ln{n}})$. More recently, ~\citet{gupta2019stochastic} gave a $O({(\ln{\ln{\ln{n}}})^2})$-competitive algorithm for KIID. Now we survey some results for the line metric. For \ao,~\citet{raghvendra2018optimal} showed that the deterministic Robust Matching algorithm achieves a ratio of $\Theta(\ln n)$.~\citet{peserico2021matching} gave a lower bound of $\Omega(\sqrt{\ln n})$ for any randomized algorithm.~\citet{gairing2019greedy} studied the setting of \ro and showed that the competitive ratio of \dg is $O(n)$ and $\Omega(n^{0.292})$.


\subsection{Main Contributions}
 Our main contributions include a tight competitive analysis of the randomized greedy (\rg) for \ombm under the uniform metric and \ro and a proof of a lower bound suggesting the optimality of \rg. We state our main theorems as follows. For a generic positive integer $n \in \mathbb{N}$, let $\tau(n):=(1+1/n) (H_{n+1}-1)$ with $H_{n+1}$ being the $(n+1)$th harmonic number. Observe that $\tau(n) = \Theta(\ln n)$. 

	\begin{theorem}\label{thm:main-1}
	Randomized Greedy (\rg)  achieves a competitive ratio equal to $\tau(n)$ for \ombm under the uniform metric and random order. Our analysis is tight.
	\end{theorem}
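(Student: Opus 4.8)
The plan is to first translate the uniform-metric structure into a purely combinatorial one. Since $d(\cdot,\cdot)$ takes values in $\{0,1\}$ and obeys the triangle inequality, ``$d=0$'' is an equivalence relation, so $L\cup R$ splits into \emph{groups}; write $(\ell_t,r_t)$ for the number of left and right points in group $t$. Any perfect matching pays $1$ for each right point matched outside its own group, so its cost equals $n$ minus the number of intra-group matches, whence $\OPT=\sum_t\max(r_t-\ell_t,0)$. Under this lens \rg\ is simple: when a right point of group $t$ arrives it takes a free same-group (cost-$0$) left point if one remains, and otherwise matches to a left point chosen uniformly at random among all currently free ones (every such choice costs $1$). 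Thus $\ALG$ equals the number of \emph{overflow} matches, and the whole task is to bound $\E[\ALG]$, where the expectation is over both the random arrival order and the uniform tie-breaking of \rg.

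Second, I would classify each overflow by whether, counting only its own group-mates, it is beyond the first $\ell_t$ of group $t$ to arrive. Such \emph{primary} overflows number exactly $\sum_t\max(r_t-\ell_t,0)=\OPT$ regardless of anything else, and every remaining (\emph{secondary}) overflow is caused because some earlier cost-$1$ match stole a still-needed slot of another group, forcing that group's right point to overflow in turn. The process is therefore a ``displacement cascade'' seeded by the $\OPT$ primary overflows and absorbed whenever a cost-$1$ match lands on a genuinely spare slot (one in a group with $r_t<\ell_t$). The goal reduces to showing the expected total cascade size is at most $\tau(n)\cdot\OPT$.

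Third, to pin down the constant exactly I expect the extremal instance to have $\OPT=1$ with a single excess point, one spare slot, and the remaining points split into balanced singleton groups (one left and one right point each): this minimizes the denominator while maximizing the number of slots available to be stolen, and I would justify it by an exchange argument showing that splitting groups and reducing each excess to $1$ only increases the ratio. On this canonical instance the cascade is a single chain: the lone overflow picks a uniformly random free slot, terminating with probability $1/s$ when $s$ slots are free and otherwise displacing a singleton's right point, which repeats. A harmonic recursion over the random arrival order then evaluates $\E[\ALG]$ in closed form; I expect the stopping probabilities $1/s$ to telescope into $\sum_{k=2}^{n+1}1/k=H_{n+1}-1$, while averaging over which group-mate of the doubled group arrives first contributes the prefactor $1+1/n$, matching $\tau(n)$ exactly and establishing tightness.

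The main obstacle is the upper bound for \emph{arbitrary} instances, i.e.\ controlling the expected cascade size uniformly in the group profile $\{(\ell_t,r_t)\}$. The difficulty is that the two randomness sources are coupled: whether a stolen slot actually triggers a secondary overflow depends on whether its right point has already arrived, which entangles the random slot-choice of \rg with the random arrival order. I would attack this either by a potential/charging argument assigning each overflow an expected harmonic charge of at most $\tau(n)$ against the primary overflow that seeds its cascade, or by proving a monotonicity (domination) statement reducing any instance to the canonical family above; the appearance of an exact constant strongly suggests the latter route, with the harmonic computation of the previous paragraph supplying the tight value.
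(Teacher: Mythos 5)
Your structural setup (groups as the equivalence classes of the uniform metric, $\OPT=\sum_t\max(r_t-\ell_t,0)$, overflow accounting for \rg) is sound, and your tightness computation is essentially the paper's lower-bound argument: the canonical instance you describe --- one excess right point, one spare left slot, and $n-1$ colocated pairs --- is exactly the paper's Example~\ref{exam}, and your single-chain harmonic recursion is the paper's $F(n)=1+\frac{1}{n}\sum_{t=1}^{n-1}\bigl(1-\frac{1}{n-t+1}\bigr)F(n-t)$ with $F(1)=1$, which solves to $\tau(n)$. So the half of the theorem asserting $\cro(\rg)\ge\tau(n)$ is in place.

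The genuine gap is the other half, $\cro(\rg)\le\tau(n)$, which you explicitly leave as an ``obstacle'' with two candidate strategies, neither carried out --- and neither is a routine fill-in. The domination route (reducing an arbitrary group profile to the canonical $\OPT=1$ family) compares instances with \emph{different} values of $\OPT$, so it is not a local exchange argument: splitting a group or shrinking an excess changes both $\E[\ALG]$ and $\OPT$, and controlling how the \emph{ratio} moves under such operations is essentially as hard as the theorem itself. The charging route breaks when $\OPT>1$: several cascades run concurrently, compete for the same spare slots, and deplete each other's pool of free left points, so the single-chain telescoping you rely on does not apply per seed. The paper avoids both difficulties by proving the bound for all instances at once: after a monotonicity observation (extra zero-cost edges only help \rg, so WLOG the zero-cost pairs are $k$ disjoint colocated pairs and all other distances are $1$), it defines $F(k,n)$ as the expected cost of \rg on that instance, derives the two-parameter recursion $F(k,n)=\bigl(1-\frac{k}{n}\bigr)+\bigl(1-\frac{k}{n}\bigr)^2F(k,n-1)+\frac{k}{n}\bigl(2-\frac{k}{n}\bigr)F(k-1,n-1)$, and shows by induction on $n$ that $F(k,n)\le(n-k)\bigl(1+\frac{1}{n}\bigr)(H_{n+1}-1)=\OPT\cdot\tau(n)$. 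The idea your proposal is missing is that the inductive statement must be \emph{linear in} $\OPT=n-k$ and proved jointly for all $k$; phrased that way the induction closes in a few lines, whereas a reduction targeting only the $\OPT=1$ family gives you no induction to run.
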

	
		\begin{theorem}\label{thm:main-2}
For \ombm under the uniform metric and random order, any algorithm (deterministic or randomized) will have a competitive ratio at least $\tau(n)$. 
	\end{theorem}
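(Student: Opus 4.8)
The plan is to prove the stronger statement that there is a \emph{single} instance $\cI^\ast$ in the uniform metric on which \emph{every} online algorithm, deterministic or randomized, incurs expected cost at least $\tau(n)\cdot\OPT(\cI^\ast)$. Since $\cro(\ALG)=\max_{\cI}\ALG(\cI)/\OPT(\cI)$, this immediately gives $\lb(\cR)\ge\tau(n)$. The first step is to remove the algorithm's randomization. In the \ro\ model the only source of input randomness, for a fixed instance, is the uniformly random arrival permutation $\sigma$ of $R$; thus $\cI^\ast$ induces a \emph{fixed} distribution over inputs. A randomized algorithm is a distribution over deterministic ones, so by Yao's minimax principle $\inf_{\ALG\in\cR}\E_\sigma[\ALG(\cI^\ast)]=\inf_{\ALG\in\cD}\E_\sigma[\ALG(\cI^\ast)]$, and it suffices to lower bound the expected cost of an \emph{arbitrary deterministic} online algorithm on $\cI^\ast$ under a random order.

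Next I would specify $\cI^\ast$ and compute its offline optimum. Work with the standard uniform-metric convention in which a match costs $0$ when request and server are co-located and $1$ otherwise, so that $\OPT(\cI^\ast)=n-\nu$, with $\nu$ the maximum size of a cost-$0$ matching. The instance (the same one witnessing tightness in Theorem~\ref{thm:main-1}) is chosen with \emph{unit deficiency}, $\nu=n-1$, so $\OPT(\cI^\ast)=1$, while its co-location neighborhoods are nested and asymmetric so that committing a request to a cost-$0$ server can destroy the cost-$0$ option of a request that has not yet arrived. For concreteness one may picture $n$ servers $1,\dots,n$ and requests $r_1,\dots,r_n$ with $r_j$ co-located exactly with $\{1,\dots,j-1\}$: here $r_1$ is isolated and costs $1$ in every matching, while $r_j\mapsto j-1$ for $j\ge2$ together with $r_1\mapsto n$ attains cost exactly $1$, so $\OPT(\cI^\ast)=1$.

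The heart of the argument is to show $\E_\sigma[\ALG(\cI^\ast)]\ge\tau(n)$ for every deterministic strategy. I would write the algorithm's cost as a sum of indicators of \emph{wasted} matches, a waste occurring whenever a request must pay $1$ because the cost-$0$ servers it could have used were already consumed. The key point, and the reason a clever online strategy cannot escape, is that whether such a conflict is created depends only on the \emph{relative arrival order} of a small set of requests, and the algorithm must commit each match before seeing the rest of $\sigma$. Conditioning on the set of already-arrived requests and exploiting exchangeability of the uniform permutation, each remaining server is, over $\sigma$, equally likely to be the one left holding the deficiency; summing the resulting conflict probabilities produces the harmonic tail $\sum_{k=2}^{n+1}1/k=H_{n+1}-1$, and the factor $1+1/n$ emerges from the boundary contribution of the isolated request, yielding exactly $\tau(n)$. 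An equivalent route is to model the optimal online algorithm as a finite-horizon Markov decision process with state (free servers, remaining request types), prove by an exchange argument that an optimal policy matches each request to its \emph{least flexible} available cost-$0$ server, and solve the one-dimensional value recurrence, which telescopes to $\tau(n)$.

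The main obstacle is precisely the unavoidable absence of server symmetry: a vertex-transitive, hence regular, co-location graph always admits a perfect cost-$0$ matching and would force $\OPT=0$, so the hard instance \emph{must} be asymmetric, and I cannot simply symmetrize over servers to argue that every strategy coincides in distribution with \rg. The crux is therefore to rule out that an order-adaptive algorithm, which may gradually learn the identities of the contended servers, evades the harmonic cost. I expect to handle this with an exchange/coupling argument showing that any deviation from the least-flexible rule does not decrease the expected number of wasted matches, together with the fact that the random order spreads the deficiency uniformly. Finally, combining $\inf_{\ALG\in\cR}\E_\sigma[\ALG(\cI^\ast)]\ge\tau(n)=\tau(n)\cdot\OPT(\cI^\ast)$ with the matching upper bound of Theorem~\ref{thm:main-1} shows the bound is tight and that \rg\ is optimal.
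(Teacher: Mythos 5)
There is a genuine gap, and it lies at the very first step: the plan to exhibit a \emph{single} fixed instance $\cI^\ast$ that is hard for every online algorithm cannot work, for this problem or any \ombm variant. Once an instance is fixed, there is a deterministic online algorithm that simply hard-codes an offline optimal matching $M^*$ of that instance: when a request arrives, it is matched to the $M^*$-partner of some not-yet-arrived request at the same location (requests at the same location are exchangeable, so this is well defined). This algorithm attains cost exactly $\OPT(\cI^\ast)$ under \emph{every} arrival order. Concretely, on your nested instance the rule ``match $r_j$ to server $j-1$ for $j\ge 2$ and match $r_1$ to server $n$'' incurs cost $1=\OPT$ deterministically, for every $\sigma$ --- note that this rule is exactly the ``least flexible available cost-$0$ server'' policy you propose to prove optimal, so the exchange/coupling argument you hope will establish a $\tau(n)$ lower bound cannot exist. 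Consequently $\inf_{D\in\cD}\E_\sigma[D(\cI^\ast,\sigma)]=\OPT(\cI^\ast)$, and your application of Yao's principle (which is correctly stated, but applied to a distribution whose only randomness is the arrival order) yields a vacuous bound. This is precisely why the paper's proof of Theorem~\ref{thm:main-2} randomizes over \emph{instances}: Lemma~\ref{lem:lb-a} builds a family $\{\mathcal{I}_\pi\}$ indexed by a uniformly random permutation $\pi$ of the servers, with cost-$0$ edges $(\pi(u_i),v_i)$, $i\in[n-1]$. The algorithm then cannot identify which free server is the isolated one, every deterministic tie-breaking rule behaves in distribution like \rg, the cost obeys the recurrence $F(n)=1+\frac{1}{n}\sum_{t=1}^{n-1}\bigl(1-\frac{1}{n-t+1}\bigr)F(n-t)=\tau(n)$, and a finite-mixture averaging argument over the (finitely many) deterministic algorithms transfers the bound to randomized ones and extracts, for each algorithm, a bad instance. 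The essential hardness comes from the hidden labeling, not from the random arrival order alone; no amount of cleverness in the single-instance analysis can substitute for it.

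A second, independent problem is that your concrete instance is not a valid uniform metric. In a metric space, $d(r_2,s_2)\le d(r_2,s_1)+d(s_1,r_3)+d(r_3,s_2)$, so if $r_3$ is co-located with both $s_1$ and $s_2$ and $r_2$ is co-located with $s_1$, then $r_2$ must also be co-located with $s_2$; co-location is an equivalence relation, and the cost-$0$ bipartite graph must be a disjoint union of complete bipartite blocks. Nested neighborhoods $\{1,\dots,j-1\}$ are therefore impossible under the uniform metric. The paper's instances respect this constraint: $n-1$ co-located server--request pairs plus one isolated server and one isolated request (``essentially $n+1$ points''), and the asymmetry you correctly identify as necessary is supplied by the random permutation $\pi$ over the family, not by the geometry of any single instance.
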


Note that \citet{meyerson2006randomized} analyzed both \dg and \rg on the uniform metric but under \ao. The main results are: (1) \dg achieves a competitive ratio of $n$, which is optimal among all deterministic online algorithms. (2) \rg achieves a competitive ratio of $H_n \sim \Theta(\ln n)$, which is optimal among all  randomized algorithms. By comparing the result in (2) with that stated in our main theorem, we see that for \ombm with the uniform metric, \rg achieves the same competitive ratio in \ro as that of \ao. This suggests that when applying \rg to \ombm with the uniform metric, the assumption of \ro does not offer any extra algorithmic power compared with that of \ao.

\section{Proof of the Main Theorem~\ref{thm:main-1}}
We split the whole proof into the below two lemmas. Throughout this section, we focus on \ombm under the uniform metric and \ro. Let $\cro(\rg)$ denote the competitive ratio achieved by the randomized greedy.

\begin{lemma}\label{lem:lb}
	$\cro(\rg) \ge \tau(n)$.
	\end{lemma}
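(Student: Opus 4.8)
The plan is to prove the lower bound by exhibiting a single hard instance $\cI_n$ (under the uniform metric and random order) on which $\rg$ already pays $\tau(n)$ times the offline optimum; since $\cro(\rg)=\max_{\cI}\rg(\cI)/\OPT(\cI)$, one such instance suffices. The instance I would use places $n-1$ coincident left--right pairs together with one ``stray'' point on each side: formally, let $L=\{\ell_0,\ell_1,\dots,\ell_{n-1}\}$ and $R=\{r_0,r_1,\dots,r_{n-1}\}$, let $d(r_i,\ell_i)=0$ for every $i\ge 1$ (so $r_i$ and $\ell_i$ sit at a common location), and let every other distance equal $1$. This is a legitimate uniform (two-valued) metric because the distance-$0$ relation is just a set of disjoint coincidence classes, so the triangle inequality holds automatically.

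The first easy step is to pin down the optimum. The distance-$0$ graph is a perfect matching on the $2(n-1)$ paired points plus the two isolated points $\ell_0,r_0$, so its maximum matching has size $n-1$; hence the offline optimum matches all $n-1$ coincident pairs at cost $0$ and sends $r_0$ to $\ell_0$ at cost $1$, giving $\OPT(\cI_n)=1$. Because the full set $R$ always arrives under \ro, this optimum is the same for every order, so no averaging is needed here.

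The heart of the argument is to show $\rg(\cI_n)=\tau(n)$, i.e.\ that the expected number of cost-$1$ edges produced by $\rg$ equals $\tau(n)$. Write $C(m)$ for this expected cost on the size-$m$ version of the instance ($m-1$ coincident pairs plus one stray pair), and condition on the first point to arrive, using that under \ro this first point is uniform over $R$ and that $\rg$ breaks ties uniformly. If the first arrival is one of the paired points $r_i$, it grabs its free coincident partner $\ell_i$ at cost $0$ and the remaining configuration is exactly a size-$(n-1)$ copy of the instance. If instead the stray point $r_0$ arrives first, it matches to a uniformly random free left point: with probability $1/n$ it takes $\ell_0$ and the rest match perfectly at cost $0$, while with probability $(n-1)/n$ it takes some paired $\ell_j$, which promotes $r_j$ to a new stray point and again leaves a size-$(n-1)$ copy of the instance. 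Collecting these cases gives the recursion
\[
C(n)=\frac1n+\frac{n^2-1}{n^2}\,C(n-1),\qquad C(1)=1,
\]
and a routine induction checks that its solution is exactly $C(n)=(1+1/n)(H_{n+1}-1)=\tau(n)$.

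The step I expect to be the main obstacle is establishing this self-reduction cleanly, i.e.\ arguing that after the first arrival the residual process is distributed as $\rg$ on the smaller instance. Two points need care: that the residual arrival order is still a uniform random permutation of the remaining requests (which is precisely the self-similarity of \ro), and --- the genuinely delicate phenomenon --- that when the stray point steals a coincident slot $\ell_j$ it does not merely cost one extra unit but triggers a \emph{cascade}, since the displaced $r_j$ becomes a fresh stray point that may in turn displace another. It is exactly this cascade that inflates the cost from $O(1)$ up to $\Theta(\ln n)$, and the recursion above is the device that accounts for it; once the recursion is in hand, solving it to the exact constant $\tau(n)$ is only bookkeeping.
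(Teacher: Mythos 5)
Your proposal is correct, and it rests on exactly the paper's hard instance: $n-1$ coincident left--right pairs plus one stray pair, with $\OPT=1$, so that the competitive ratio of \rg on this instance equals its expected cost. Where you genuinely differ is in how that expected cost is computed. The paper (Lemma~\ref{lem:fn}) conditions on the arrival \emph{time} $t$ of the stray request: before time $t$ every arrival grabs its coincident partner for free, and at time $t$ the stray either hits the stray left point (probability $1/(n-t+1)$ among the $n-t+1$ free left points) or displaces a paired point, leaving an isomorphic copy of size $n-t$. This yields the full-history recurrence
\[
F(n)=1+\frac{1}{n}\sum_{t=1}^{n-1}\Bigl(1-\frac{1}{n-t+1}\Bigr)F(n-t),\qquad F(1)=1,
\]
whereas you condition only on the identity of the \emph{first} arrival, which collapses everything into the first-order recurrence $C(n)=\tfrac{1}{n}+\tfrac{n^2-1}{n^2}\,C(n-1)$, $C(1)=1$. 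Both recurrences are exact consequences of the same two facts you isolate (under \ro the residual order is again uniform, and a displaced partner becomes a fresh stray, so the residual instance is a copy one size down), and both solve to $\tau(n)=(1+1/n)(H_{n+1}-1)$; I verified your recurrence reproduces $\tau(2)=5/4$, $\tau(3)=13/9$ and telescopes to the closed form in general. What your version buys is a cleaner induction: the self-reduction happens after a single arrival, and the first-order recurrence is trivial to solve, avoiding the convolution over all arrival times. What the paper's version buys is that all cost-relevant randomness is packed into the stray's arrival time, a viewpoint it reuses almost verbatim in the proof of Lemma~\ref{lem:lb-a} (hence Theorem~\ref{thm:main-2}), where the same recurrence must be established for arbitrary deterministic algorithms rather than just for \rg; your first-arrival conditioning would need to be re-examined there, since a general deterministic algorithm need not treat the residual instance isomorphically. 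As a proof of Lemma~\ref{lem:lb} alone, your argument is a complete and arguably tidier substitute.
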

	
	\begin{lemma}\label{lem:ub}
	$\cro(\rg) \le \tau(n)$.
	\end{lemma}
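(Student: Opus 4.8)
The plan is to recast Lemma~\ref{lem:ub} combinatorially and then run a single induction on $n$ driven by the first arrival. First I would use that the uniform metric makes every distance $0$ or $1$, so I may assume the $n$ points of $L$ sit at distinct locations $\ell_1,\dots,\ell_n$ and every arriving point of $R$ is either \emph{typed} (it coincides with some $\ell_i$, costing $0$ to match there and $1$ elsewhere) or \emph{empty} (it coincides with no left point, costing $1$ everywhere); coincident left points only ever give \rg\ extra free ``absorbers,'' so they cannot raise the ratio and can be reduced away. If $k$ denotes the number of distinct types actually present, then $\OPT=n-k$, while the cost of \rg\ is exactly its number of cost-$1$ edges. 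Writing $b$ for the number of empty points, I note the automatic inequality $b\le n-k=\OPT$, since each of the $k$ present types already uses up one right point. The goal becomes $\E[\rg(I)]\le \tau(n)\,\OPT(I)$ for every such instance $I$.

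The engine is a recursion obtained by conditioning on the first arrival (and, for empty points, on \rg's uniform choice among the $n$ free slots). The three facts I would establish are: (i) by exchangeability the remaining $n-1$ points still arrive in uniformly random order; (ii) a point whose own slot has already been taken behaves from then on exactly like an empty point, so after deleting the first point and the slot it occupies the residual is \emph{again} a canonical instance, now of size $n-1$; and (iii) the residual's optimum drops to $\OPT-1$ precisely when the first arrival is an empty point landing on a phantom (left-only) slot, an event of probability $\frac{b}{n}\cdot\frac{n-k}{n}=\frac{b\,\OPT}{n^2}$, and equals $\OPT$ in every other branch. These give a closed recursion for $R(I):=\E[\rg(I)]$ in terms of expected residual costs.

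For the induction I would first record that $\tau$ is characterised by $\tau(n)=\frac1n+\frac{n^2-1}{n^2}\,\tau(n-1)$ with $\tau(1)=1$. Assuming $R(I')\le \tau(n-1)\,\OPT(I')$ for all size-$(n-1)$ instances, I substitute the branch bounds $\tau(n-1)(\OPT-1)$, $\tau(n-1)\OPT$, and $\tau(n-1)\OPT$ into the recursion. The total probability telescopes and the two identities $P_{\text{empty}\to\text{phantom}}+P_{\text{empty}\to\text{present}}=\frac{b}{n}$ and $P_{\text{empty}\to\text{phantom}}=\frac{b\,\OPT}{n^2}$ collapse everything to $R(I)\le \frac{b}{n}+\tau(n-1)\,\OPT\bigl(1-\frac{b}{n^2}\bigr)$. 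Comparing with $\tau(n)\OPT$ through the recursion for $\tau$, the desired inequality reduces to $\frac{\OPT-b}{n}+\tau(n-1)\,\frac{\OPT\,(b-1)}{n^2}\ge 0$, which holds in both cases $b=0$ (where $\tau(n-1)\le n$ settles it) and $b\ge 1$ (where both terms are nonnegative because $b\le \OPT$). The base case $n=1$ is immediate.

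The main obstacle is conceptual rather than computational: \rg's two sources of randomness (the random order and the uniform tie-break among free slots) create cascades of stolen slots whose lengths are not independent, so a direct accounting is delicate. The first-arrival recursion is what tames this, by turning one step of a cascade into an identically-structured sub-instance of size $n-1$; the linchpins are the bookkeeping that $\OPT$ decreases in exactly one branch and the automatic bound $b\le\OPT$, which together make the induction close with equality on the single-empty-point instance, so the bound is tight and matches Lemma~\ref{lem:lb}. The one place demanding care outside this scheme is the preliminary reduction asserting that coincident left points do not increase the ratio.
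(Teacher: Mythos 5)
Your proposal is correct in substance, and its engine is the same as the paper's: condition on the first arrival, observe that the residual problem is again an instance of the same class with size $n-1$, and close an induction in which each branch is bounded by $\tau(n-1)$ times the residual optimum. The differences lie in the bookkeeping, and they are real. The paper first asserts, without justification, that one may take the zero-cost pairs to form a perfect matching ($d_{ii}=0$ for $i\in[k]$, all other distances $1$), then tracks the two-parameter quantity $F(k,n)$ and proves $F(k,n)\le (n-k)\tau(n)$ by explicit harmonic-number algebra. You instead work with a broader class (left points at distinct locations; right points either typed, with multiplicities allowed, or empty) and drive the induction with two scalar statistics, the number $b$ of empty points and \OPT, via two observations the paper never isolates: the residual optimum drops by one in exactly one branch, of probability $b\,\OPT/n^2$, and $b\le\OPT$; combined with the recursion $\tau(n)=\tfrac1n+\tfrac{n^2-1}{n^2}\tau(n-1)$ this closes the induction. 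I checked your algebra: the branch substitution does give $R(I)\le \tfrac{b}{n}+\tau(n-1)\,\OPT\,(1-\tfrac{b}{n^2})$, the comparison with $\tau(n)\OPT$ does reduce to $\tfrac{\OPT-b}{n}+\tau(n-1)\tfrac{\OPT(b-1)}{n^2}\ge 0$, and both cases ($b=0$ via $H_n\le n$, and $b\ge1$ via $b\le\OPT$) hold. Your version also handles duplicated types, which the paper's WLOG silently excludes, so on that front you are strictly more careful than the paper.

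The one step you flag but do not prove---that coincident left points cannot raise the ratio---is a genuine hole as written, though it exactly parallels (and is weaker than) the paper's own unproven WLOG. Better: it can be eliminated inside your framework rather than proved as a separate reduction. Run the same first-arrival induction on fully general uniform-metric instances, i.e., locations indexed by $i$ holding $a_i$ left and $b_i$ right points. This class is closed under your residual operation; $b=\sum_{i:a_i=0}b_i\le\sum_i\max(0,b_i-a_i)=\OPT$ still holds; and the residual optimum drops by one exactly when an empty first arrival steals a slot at a location with $a_j>b_j$. The number of such slots is $\sum_{j:a_j>b_j}a_j\ge\sum_{j:a_j>b_j}(a_j-b_j)=\OPT$, so the drop probability is at least $b\,\OPT/n^2$, which only strengthens the bound $R(I)\le \tfrac{b}{n}+\tau(n-1)\,\OPT\,(1-\tfrac{b}{n^2})$; the rest of your argument is unchanged. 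With that substitution your proof is complete and, unlike the paper's, requires no unproven reduction at all.
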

	
	\subsection{Proof of Lemma~\ref{lem:lb}}
	Observe that though Theorem~\ref{thm:main-2} implies Lemma~\ref{lem:lb}, we present the proof for the completeness, and we believe the  proof below offers insights into that of Theorem~\ref{thm:main-2} and can serve as a good warmup. 
	
	\begin{example}\label{exam}
Consider the following instance. Recall that $|L|=|R|=n$ and $d_{ij}=d(i,j)$. Set $d_{ij}=0$ iff $i=j \in [n-1]$ and $d_{ij}=1$ otherwise. In other words,  the first $n-1$ points in $L$ colocate with the first $n-1$ points in $R$, and there are essentially only $n+1$ points in $L \cup R$.  
\end{example}

From the nature of \rg, we see that on the above example: (1) during each time when $j=n$ arrives, we will uniformly sample one available $i \in L$ since all costs are $1$; (2) during each time when some $j<n$ arrives, we first check if $i=j$ is free. If so, then match it; otherwise  uniformly sample one available $i \in L$. 	Observe that offline optimal is $\OPT=1$. Let $F(n)$ be the expected cost of \rg on Example~\ref{exam}.

\begin{lemma}\label{lem:fn}
$F(n)=\tau(n)$.
\end{lemma}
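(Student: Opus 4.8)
The plan is to set up a recursion for $F(n)$ by conditioning on the identity of the first arriving point of $R$, and then to verify that $\tau(n)$ solves this recursion. Note first that the total cost of \rg on Example~\ref{exam} equals the number of ``dirty'' matches (matches of cost $1$), since every match that pairs some $j<n$ with its free colocated partner $i=j$ costs $0$. The crucial preliminary observation is that on any residual instance in which \emph{every} still-present point of $L$ is colocated at cost $0$ with a still-present point of $R$, \rg incurs zero total cost: an easy induction shows that the first arrival always finds its free zero-cost partner, peeling off one such pair and leaving a smaller instance of the same kind.

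First I would condition on the first arrival. If it is some $j<n$ (probability $(n-1)/n$), then $L$-point $j$ is still free, so \rg matches it at cost $0$ and removes the pair $(j,j)$; the residual instance has $n-1$ points per side with exactly $n-2$ intact zero-cost colocated pairs and one ``odd'' point on each side (namely $L$-point $n$ and $R$-point $n$, at distance $1$), hence is isomorphic to Example~\ref{exam} with parameter $n-1$ and contributes $F(n-1)$. If instead the first arrival is $j=n$ (probability $1/n$), \rg pays $1$ and samples a uniformly random free $L$-point $k\in[n]$: when $k=n$ (probability $1/n$) the residual instance is entirely zero-cost colocated and contributes $0$ by the observation above, whereas when $k\in[n-1]$ (probability $(n-1)/n$) the residual instance again has $n-2$ intact zero-cost pairs and one odd point on each side ($L$-point $n$ and $R$-point $k$), so it too is isomorphic to Example~\ref{exam} with parameter $n-1$ and contributes $F(n-1)$. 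Collecting the cases gives
\[
F(n)=\frac1n\left(1+\frac{n-1}{n}F(n-1)\right)+\frac{n-1}{n}F(n-1)=\frac1n+\frac{n^2-1}{n^2}F(n-1),
\]
with base case $F(1)=1$ (on Example~\ref{exam} with $n=1$ the single arrival pays $1$).

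Finally I would verify by induction that $\tau(n)=(1+1/n)(H_{n+1}-1)$ solves this recursion. One checks $\tau(1)=2(H_2-1)=1$, and using $H_{n+1}=H_n+\frac{1}{n+1}$ together with $\tau(n-1)=\frac{n}{n-1}(H_n-1)$, a short computation shows
\[
\frac1n+\frac{n^2-1}{n^2}\,\tau(n-1)=\frac1n+\frac{n+1}{n}(H_n-1)=\frac{n+1}{n}(H_{n+1}-1)=\tau(n),
\]
which closes the induction. The main obstacle is the middle step: one must argue carefully that, \emph{regardless} of which point is consumed first, the residual configuration is always isomorphic either to a smaller copy of Example~\ref{exam} or to a fully zero-cost instance. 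This hinges on the uniform metric, which makes every non-colocated distance equal to $1$, so that the identity of the stolen $L$-point is irrelevant beyond whether it is the odd point $n$ or one of the colocated points; it also uses the fact that under random order the remaining arrivals stay uniformly random after conditioning on the first one, so that the residual expected cost is indeed $F(n-1)$ (or $0$) independently of the labels involved.
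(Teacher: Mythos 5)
Your proposal is correct, and it takes a genuinely different (and in one respect more complete) route than the paper. The paper conditions on the \emph{arrival time} $t$ of the odd point $j=n$, which is uniform over $[n]$: arrivals before time $t$ all match at cost $0$, and at time $t$ the point $j=n$ pays $1$ and either takes $i=n$ (remaining cost $0$) or steals a colocated $L$-point with probability $1-\frac{1}{n-t+1}$, leaving a copy of the example of size $n-t$. This yields the full-history recursion
\[
F(n)=1+\frac{1}{n}\sum_{t=1}^{n-1}\Bigl(1-\frac{1}{n-t+1}\Bigr)F(n-t),
\]
whose solution $F(n)=\tau(n)$ the paper then simply asserts. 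You instead condition only on the \emph{identity of the first arrival}, obtaining the one-step recursion $F(n)=\frac{1}{n}+\frac{n^2-1}{n^2}F(n-1)$, and you verify explicitly by induction that $\tau(n)$ solves it; I checked that the two recursions agree (e.g.\ both give $F(2)=\frac{5}{4}$, $F(3)=\frac{13}{9}$) and that your algebra in the inductive step is right. Both arguments rest on the same structural facts — the residual instance after the first match is isomorphic either to a smaller copy of Example~\ref{exam} or to a fully colocated instance on which \rg pays nothing, and the remaining arrival order stays uniform after conditioning — and you are appropriately careful about both points. What your route buys is a first-order recursion whose closed form can be verified by a three-line induction, making the proof self-contained where the paper leaves the solution of its recursion to the reader; what the paper's route buys is a decomposition that mirrors the reduction used later in the lower-bound argument for Theorem~\ref{thm:main-2} (conditioning on when the odd point arrives), so its recursion is reused there verbatim.
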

\begin{proof}
First, we can verify that $F(1)=1$. When $n=1$, we have essentially one point each in $L$ and $R$ with a distance $1$. Thus, the expected cost of \rg is $1$. Now we try to devise a recursive formula on $F(n)$. 
Consider the case when $j=n$ arrives at some time $t \in [n-1]$. Note that in our instance, $d_{i,n}=1$ for all $i \in L$. We observe that (1) the current total cost for \rg is $0$ so far before matching $j=n$; (2) if we match $j=n$ to $i=n$, then the total final cost of \rg will be $1$; (3) if we match $j=n$ to some available $i^*\in L$ with $i^*<n$, then the total cost should be $1+F(n-t)$. In this scenario the remaining case can be reduced to the exact same problem with size $n-t$ where $j=i^*$ will play the role of $j=n$ before (we are sure $j=i^*$ never arrives before $j=n$ otherwise $i^*$ will not be available). Thus, summarizing the above analysis, we have that
\begin{align*}
F(n) &=1+\frac{1}{n}\sum_{t=1}^{n-1} \Big(1-\frac{1}{n-t+1} \Big)F(n-t) 
\end{align*}
Together with the initial value $F(1)=1$, we can solve that $F(n)=(1+1/n)\sum_{t=1}^{n} \frac{1}{t+1}=\tau(n)$. 
\end{proof}
	
\xhdr{Proof of Lemma~\ref{lem:lb}}	
\begin{proof}
Note that on Example~\ref{exam}, the offline optimal $\OPT=1$ and \rg has
an expected total cost of $F(n)$. This suggests that the competitive ratio of \rg on Example~\ref{exam} is equal to $F(n)$. By definition of the competitive ratio, we claim $\cro(\rg) \ge F(n)=\tau(n)$.
\end{proof}

\subsection{Proof of Lemma~\ref{lem:ub}}
		Consider a general instance of \ombm with $|L|=|R|=n$. Suppose the offline optimal is $\OPT=n-k$ with $k \le n$. Since the offline optimal is $n-k$, it suggests that there are at least $k$ zero-cost pairs of points in $L\times R$.
	 WLOG assume that $d_{ij}=0$ for all $i=j\in [k]$ and all the rest $d_{ij}=1$.
	 Let $F(k,n)$ be the expected cost returned by \rg. 
	 According to the nature of \rg, we have the following two cases.
	\begin{itemize}
	\item  With probability $k/n$, some $j\in [k]$ will arrive first. In this case, we will match $j$ with $i=j$ with cost zero and the remaining case is reduced to $F(k-1,n-1)$.
	\item  With probability $1-k/n$, some $k+1 \le j \le n$ will arrive first. In this case, we will have a unit matching cost for $j$. If we match $j$ with some $i \le k$, then the resulting case is reduced to $F(k-1,n-1)$ (occurring with probability $k/n$); otherwise,  the resulting case is reduced to $F(k,n-1)$ (occurring with probability $1-k/n$).
	\end{itemize}
	
	Summarizing the above analysis, we have that
	\begin{align*}
F(k,n)&=\frac{k}{n} F(k-1,n-1) \\
&+\Big(1-\frac{k}{n} \Big)\Big(1+ \frac{k}{n} F(k-1,n-1) \Big)\\
&+\Big(1-\frac{k}{n} \Big)\Big( \big(1-\frac{k}{n} \big) F(k,n-1) \Big)
	\end{align*}	
	Simplifying the above expression, we have \\
	$F(k,n)=(1-k/n)+A(k,n)+B(k,n)$, where
	\begin{align}
	A(k,n)&= \Big(1-\frac{k}{n} \Big)^2 F(k,n-1) \label{exp-1}\\
	B(k,n)&=\frac{k}{n}\Big(2-\frac{k}{n} \Big)F(k-1,n-1)\label{exp-2}
	\end{align}

	\begin{lemma}\label{lem:up-1}
\begin{equation}	\label{ineq:up}
F(k,n) \le (n-k)\Big(1+\frac{1}{n}\Big)(H_{n+1}-1)
\end{equation}
	\end{lemma}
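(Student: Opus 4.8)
The plan is to prove the bound by induction on $n$, using the recursion $F(k,n)=(1-k/n)+A(k,n)+B(k,n)$ with $A,B$ as in~\eqref{exp-1}--\eqref{exp-2}. The base case $n=1$ is immediate: $F(0,1)=1=\tau(1)$ and $F(1,1)=0$, both matching the right-hand side. In the inductive step (for $n\ge 2$) I would assume the claim for $n-1$ and all admissible indices, and substitute the two hypotheses $F(k,n-1)\le(n-1-k)\tau(n-1)$ and $F(k-1,n-1)\le(n-k)\tau(n-1)$ into the recursion. The degenerate case $m:=n-k=0$ is trivial, since then $\OPT=0$, the right-hand side vanishes, and the coefficient $(1-k/n)^2$ of $F(k,n-1)$ is zero.

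The decisive simplification is the change of variable to the offline optimum $m:=n-k$. Writing $1-k/n=m/n$, $(1-k/n)^2=m^2/n^2$, and $(k/n)(2-k/n)=1-m^2/n^2$, and noting $n-1-k=m-1$ and $n-k=m$, the inductive bound becomes
\begin{align*}
F(k,n)\le \frac{m}{n}+\frac{m^2}{n^2}(m-1)\,\tau(n-1)+\Big(1-\frac{m^2}{n^2}\Big)m\,\tau(n-1).
\end{align*}
The cubic-in-$m$ contributions cancel, collapsing the right-hand side to $\frac{m}{n}+m\,\tau(n-1)\big(1-\frac{m}{n^2}\big)$. Dividing by $m$ (the case $m=0$ being handled above), it remains to show
\begin{align*}
\frac{1}{n}+\tau(n-1)\Big(1-\frac{m}{n^2}\Big)\le \tau(n).
\end{align*}

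The key observation is that the left-hand side is affine and \emph{decreasing} in $m$, so over $m\in\{1,\dots,n\}$ it is largest at $m=1$; hence it suffices to verify the single inequality at $m=1$. There, substituting $\tau(n-1)=\frac{n}{n-1}(H_n-1)$ and using $\frac{n^2-1}{n^2}=\frac{(n-1)(n+1)}{n^2}$ turns the left-hand side into $\frac{1}{n}+\frac{n+1}{n}(H_n-1)$, and the harmonic identity $H_{n+1}-H_n=\frac{1}{n+1}$ makes this exactly equal to $\frac{n+1}{n}(H_{n+1}-1)=\tau(n)$. Thus the induction closes, with equality precisely at $m=1$ (i.e. $k=n-1$), matching the tight instance of Example~\ref{exam}.

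The step I expect to be the main obstacle is not any single computation but recognizing the right reduction: one must (i) choose the variable $m=n-k$ so that the $k/n$ factors combine cleanly, (ii) spot the cancellation of the $m^3/n^2$ terms, and (iii) use monotonicity in $m$ to pin the worst case at $m=1$ rather than attempting a bound uniform in $m$. Once these are in place, the harmonic-number identity both finishes the proof and explains why the bound is exactly tight.
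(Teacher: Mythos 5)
Your proof is correct and takes essentially the same route as the paper: induction on $n$ with the identical recursion and the identical inductive hypotheses for $F(k,n-1)$ and $F(k-1,n-1)$. Your substitution $m=n-k$, the cubic cancellation, and the monotonicity-in-$m$ reduction to the case $m=1$ are just a tidier repackaging of the paper's chain of equivalences, which ends by observing $-(n-k-1)\le 0$ and likewise pins the tight (equality) case at $k=n-1$.
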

	
	Note that Lemma \ref{lem:up-1} immediately implies Lemma \ref{lem:ub} since the offline optimal is equal to $n-k$ and the instance is arbitrarily selected. We prove Lemma~\ref{lem:up-1} by induction on $n$.
	\begin{proof}
	Consider the basic case $n=1$. We can verify that $F(0,1)=1$ and $F(1,1)=0$, which satisfies Inequality \eqref{ineq:up}. Assume the inequality is valid for all $k \le n \le N$  for some integer $N$. Now we show the case of $N+1$. Consider any given $k \le n \le N+1$. If $k=n$, we have that $F(k,n)=0$, then we are done. Now assume $k<n \le N+1$. Recall that $F(k,n)=(1-k/n)+A(k,n)+B(k,n)$. By inductive assumptions, we have
\begin{align*}
A(k,n) &\le  \Big(1-\frac{k}{n} \Big)^2 (n-k-1) \Big(1+\frac{1}{n-1}\Big)(H_n-1)
\\
B(k,n)  &\le \frac{k}{n}\Big(2-\frac{k}{n}\Big)(n-k)\Big(1+\frac{1}{n-1}\Big)(H_n-1)
\end{align*}
Plugging the above two inequalities to $F(k,n)=(1-k/n)+A(k,n)+B(k,n)$, we have
\[
F(k,n) \le  \Big(1-\frac{k}{n} \Big)+\frac{(n-k)(n^2-n+k)}{n(n-1)} (H_n-1) \doteq G(k,n)
\]

	Notice that 
		\begin{align*}
       &G(k,n) \le (n-k)\Big(1+\frac{1}{n}\Big)(H_{n+1}-1) \\
       & \Leftrightarrow   \Big(1-\frac{k}{n} \Big)+\frac{(n-k)(n^2-n+k)}{n(n-1)} (H_n-1) \\
        &~~~~\le (n-k)\Big(1+\frac{1}{n}\Big)(H_{n+1}-1) \\
       & \Leftrightarrow  \frac{1}{n}+\frac{(n^2-n+k)}{n(n-1)} (H_n-1)\le \Big(1+\frac{1}{n}\Big)(H_{n+1}-1) \\
         & \Leftrightarrow \frac{1}{n}+\frac{(n^2-n+k)}{n(n-1)} (H_n-1)\le \Big(1+\frac{1}{n}\Big)(H_{n}-1)+\frac{1}{n}  \\
       &    \Leftrightarrow  \frac{H_n-1}{n(n-1)} \big(n^2-n+k-(n-1)(n+1)\big) 
           \le 0 \\
          &  \Leftrightarrow   -(n-k-1) \le 0 
	\end{align*} 
Now, we have proved that for any $k \le n \le N+1$,
\[F(k,n) \le G(k,n) \le  (n-k)\Big(1+\frac{1}{n}\Big)(H_{n+1}-1).
\]\end{proof}

\section{Proof of the Main Theorem~\ref{thm:main-2}}
In this section, we describe an approach to show that any randomized algorithm has a competitive ratio of at least $\tau(n)=(1+1/n)\sum_{t=1}^n 1/(t+1)$ for \ombm  under the random order arrival. The high-level idea is as follows. We construct a family of instances $\mathcal{F}$ and a probability distribution $p$ over these instances. We show that any deterministic online algorithm $D$ has an expected cost of $\tau(n)$ for an instance $f$ chosen randomly from $\mathcal{F}$ using $p$ and then choosing a random arrival order $\sig$ of the vertices on the randomly chosen instance $f$. In other words $\mathbb{E}_{f \sim_p \mathcal{F}}[\mathbb{E}_{\sigma}[D(f, \sigma)]] \geq \tau(n)$.  Formally, we have the lemma below.

\begin{lemma} \label{lem:lb-a}
There exists a family of instances $\mathcal{F}$ of \ombm and a distribution $p$ such that any deterministic online algorithm $D$ has an expected cost of $\tau(n)$, where the expectation is taken over both the distribution of $p$ and the random arrival order of vertices.
\end{lemma}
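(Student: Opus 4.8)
The plan is to invoke Yao's minimax principle. Since every instance we build will have $\OPT=1$, it suffices to exhibit a distribution $p$ over a family $\mathcal{F}$ of uniform-metric instances for which \emph{every} deterministic online algorithm $D$ incurs expected cost at least $\tau(n)$, the expectation taken over both $p$ and the random arrival order $\sigma$; this immediately yields the bound for all randomized algorithms, giving Lemma~\ref{lem:lb-a} and hence Theorem~\ref{thm:main-2}. The first point to appreciate is why a single instance such as Example~\ref{exam} is \emph{not} enough: on a fixed instance a deterministic algorithm knows which vertex of $L$ is ``isolated,'' so when the isolated vertex of $R$ arrives it simply matches it to the isolated vertex of $L$ and pays exactly $1=\OPT$. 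The sole purpose of $\mathcal{F}$ and $p$ is to hide the identity of the isolated $L$-vertex.

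I would take $\mathcal{F}=\{f_\pi : \pi \text{ a permutation of } [n]\}$, where in $f_\pi$ the distance-$0$ edges are exactly $\{(\pi(j),j): j\in[n-1]\}$, so that $R$-vertex $n$ has no zero-cost partner, $L$-vertex $\pi(n)$ is wanted by no $R$-vertex, and all remaining distances equal $1$. Each $f_\pi$ has a perfect matching of cost $1$ (match $j$ to $\pi(j)$ for $j<n$ and pay $1$ for the last pair), so $\OPT=1$. Let $p$ be uniform over $\pi$ and let $\sigma$ be an independent uniform arrival order. The key property is symmetry: when an $R$-vertex $j<n$ arrives it reveals its partner $\pi(j)$, but the surviving hidden values $\{\pi(j'): j' \text{ not yet arrived}\}\cup\{\pi(n)\}$ remain, conditionally on everything observed, a uniformly random assignment to the currently free $L$-vertices. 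Hence at the moment the isolated $R$-vertex $n$ arrives, say at time $t$ with $n-t+1$ free $L$-vertices, the isolated $L$-vertex $\pi(n)$ is equally likely to be any of them, so no deterministic rule identifies it with probability better than $1/(n-t+1)$.

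With this in hand the analysis mirrors Lemma~\ref{lem:fn}. I would argue by strong induction on $n$ that $c(n):=\min_D \mathbb{E}_{f\sim p,\,\sigma}[D(f,\sigma)]\ge\tau(n)$. Conditioning on the first arrival: with probability $(n-1)/n$ it is a non-isolated vertex, and matching it to its revealed partner leaves a residual instance that is a relabeled copy of the size-$(n-1)$ construction (call its distribution $p_{n-1}$), contributing at least $c(n-1)$; with probability $1/n$ it is the isolated vertex, which forces a unit cost, and by the symmetry above $D$ wastes a still-needed $L$-vertex with probability $1-1/n$, again leaving a $p_{n-1}$-instance in which the newly stranded $R$-vertex plays the role of the isolated one. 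This gives the recursion $c(n)\ge \frac{1}{n}+\bigl(1-\frac{1}{n^2}\bigr)c(n-1)$ with $c(1)=1$, whose solution is exactly $\tau(n)=(1+1/n)(H_{n+1}-1)$ and agrees with the value produced by the recursion of Lemma~\ref{lem:fn}; since the natural greedy attains equality, $c(n)=\tau(n)$.

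The main obstacle is to handle \emph{arbitrary} deterministic algorithms rather than the greedy-like one suggested by the recursion, i.e.\ to justify that on a non-isolated arrival it is without loss of optimality to take the available zero-cost partner. This needs a short exchange argument: diverting such an arrival to a unit-cost $L$-vertex pays $1$ immediately and can only turn the residual configuration into one still containing at least one stranded $R$-vertex and one superfluous free $L$-vertex, i.e.\ an instance distributed as $p_{n-1}$ or worse, so the total never drops below $c(n-1)$. Verifying that each residual configuration is genuinely a relabeled copy of $p_{n-1}$, so that both the inductive hypothesis and the uniform-posterior property transfer across all of $D$'s possible information states, is the one place where care is required; the uniformity of $\pi$ is exactly what makes this go through.
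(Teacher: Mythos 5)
Your proposal is correct, and it rests on the same foundation as the paper's proof: the identical family $\mathcal{F}$ (zero-cost edges $(\pi(j),j)$ for $j\in[n-1]$, with a hidden uniform permutation $\pi$), the same uniform distribution $p$, and the same two supporting observations, namely that any deterministic algorithm is dominated by one that always takes an available zero-cost edge, and that conditioned on the observed history the hidden permutation stays uniform over the currently free left vertices. Where you genuinely diverge is the inductive decomposition. The paper converts a fixed deterministic algorithm on a random instance into a uniformly tie-breaking randomized algorithm on a fixed instance, and then, by conditioning on the arrival time $t$ of the isolated right vertex, re-derives exactly the recurrence of Lemma~\ref{lem:fn}, $F(n)=1+\frac{1}{n}\sum_{t=1}^{n-1}\left(1-\frac{1}{n-t+1}\right)F(n-t)$, whose solution $\tau(n)$ is already known; this recycles earlier work and makes the optimality of \rg conceptually transparent (averaging over $\pi$ turns every deterministic tie-breaking rule into uniform tie-breaking). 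You instead perform a first-step analysis, conditioning on whether the first arrival is the isolated vertex, which yields the one-term recursion $c(n)\ge\frac{1}{n}+\left(1-\frac{1}{n^2}\right)c(n-1)$ with $c(1)=1$; this is cleaner to iterate but needs the separate (easy) algebraic check that $\tau$ satisfies it, which indeed holds because $\tau(n)=\frac{n+1}{n}\left(H_{n+1}-1\right)=\frac{n+1}{n}\left(H_n-1\right)+\frac{1}{n}=\frac{1}{n}+\left(1-\frac{1}{n^2}\right)\tau(n-1)$. One point of care, applying equally to the paper's own terse write-up: your exchange argument is best phrased as a pointwise swap rather than the distributional comparison you sketch. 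If the algorithm skips the free partner $\pi(j)$ in favor of some $w$, modify it to match $\pi(j)$ and thereafter substitute $w$ wherever the original would have used $\pi(j)$; since $\pi(j)$ has cost $1$ to every future arrival, the modified run is no more expensive on every realization. This avoids having to prove that the "worse" residual instances (two stranded arrivals, two superfluous vertices) cost at least $c(n-1)$, which your sketch implicitly requires and which would otherwise demand an extra monotonicity/coupling argument.
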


We first show how the above lemma implies Theorem~\ref{thm:main-2}.
\begin{proof}
Assuming that $(\mathcal{F},p)$ satisfies the property stated in Lemma~\ref{lem:lb-a} for any deterministic algorithm. First, note that for any given $n$, the number of deterministic algorithms is a finite number. Indeed, there are at most $n$ rounds, and in each round the deterministic algorithm can choose to try and match an arriving vertex to one of the $U$ vertices by processing them in some fixed order for this round. Thus, for each arrival there are $n!$ different number of deterministic possible sequences in which the vertices in $U$ can be processed. In other words, a deterministic algorithm is a map from the index of the round and the vertex in $V$ to the sequence with which the vertices in $U$ will be processed. This implies that the total number of deterministic algorithms is upper-bounded by the quantity $K := n^2 \cdot n!$. Let \\$\cD=\{D_1, D_2, \ldots, D_K\}$ be the collection of all deterministic algorithms. Let $D_k(f,\sigma)$ denote the cost incured when running $D_k$ on a given instance $f \in \cF$ and a given arrival order $\sig$ of the RHS vertices of $f$. By the property in   Lemma~\ref{lem:lb-a}, we see that
 \[
 \mathbb{E}_{f \sim_p \mathcal{F}, \sigma}\big[D_k(f, \sigma)\big] \geq \tau(n), \forall D_k \in \cD.
 \]
 This suggests that for any vector  $\boldsymbol{\alpha}=(\alp_1, \alp_2, \ldots, \alp_K) \in [0,1]^K$ with $\sum_{k=1}^K \alp_{k}=1$,  
  \begin{align}\label{ineq:lb-a}
 \mathbb{E}_{f \sim_p \mathcal{F}, \sigma}\big[\sum_{k=1}^K \alp_k \cdot D_k(f, \sigma)\big] \geq \tau(n).
 \end{align}
 Note that for
 any online randomized algorithm $R$, there exists a unique vector $\boldsymbol{\alpha}^R=[0,1]^K$ with $\sum_{k=1}^K \alp^{R}_k=1$ such that $R$ can be viewed as running the deterministic algorithm $D_k$ with probability $\alpha_k$ for $k=1,2,\ldots,K$. Thus, Inequality~\eqref{ineq:lb-a} implies that for any randomized algorithm $R$,
 \[
 \mathbb{E}_{f \sim_p \mathcal{F}, \sigma} \big[\E_R[R(f, \sigma)]\big] \geq \tau(n).
 \]
Therefore, we claim the existence of an instance $\hat{f}$ such that $\E_{R,\sigma}[R(\hat{f}, \sigma)] \ge \tau(n)$ for all $R$. 
\end{proof}

\subsection{Proof of Lemma~\ref{lem:lb-a}}
Now we show how to construct the family $\mathcal{F}$ and the distribution $p$ such that any deterministic algorithm incurs a regret of at least $\tau(n)$.

We have a complete bipartite graph $(U, V, E)$ with $U=\{u_1, u_2, \ldots, u_n\}$ and $V=\{v_1, v_2, \ldots, v_n\}$. Let instance $\mathcal{I}_\pi$ denote the following instance. Choose a random permutation $\pi$ over $U$. For all edges of the form $(\pi(u_i), v_i)$ with $i \in [n-1]$ we add an edge of cost $0$. All other edges have a cost $1$. The distribution $p$ is the uniform distribution.

We will now make an observation which will simplify the description of the proof. We have that for every instance in the family $\mathcal{F}$ and arrival sequence, any deterministic algorithm $\tilde{D}$ that ignores a $0$-cost neighbor, when available, is dominated by another deterministic algorithm $D$ that chooses the $0$-cost neighbor at that step. This implies that it suffices to consider only those deterministic algorithms $D$ which chooses the $0$-cost neighbor when available and uses a fixed ordering of vertices in $U$ to break the ties among the $1$-cost edges.

Now we will prove that any deterministic algorithm incurs an expected cost of at least $\tau(n)$. Consider a deterministic algorithm $D$. This algorithm can be viewed as a $n \times n$ matrix $M_D$ where the $i^{th}$ row represents the order in which the algorithm chooses a neighbor for vertex $v_i$. Note from the observation above we consider only those matrices where in the first $n-1$ rows we have that the first element is the $0$-cost neighbor. Consider the sequence of arrivals $\sigma$. Since the matrix is fixed, on a random input from this family $(f, \sigma)$ the action of this algorithm is equivalent to the following randomized algorithm on the instance where the $0$-cost edges are between the pairs $(u_1, v_1), (u_2, v_2), \ldots, (u_{n-1}, v_{n-1})$. The algorithm chooses a permutation $\pi$ over the labels of the vertices in $U$. For each vertex $v_i$, the randomized algorithm checks the vertices in $\pi$ according to the order $M_D(i)$. We will now prove via induction that the expected cost incurred by this algorithm is given by the recurrence,
$$F(n) = 1 + \frac{1}{n}\sum_{t=1}^{n-1}\left( 1-\frac{1}{n-t+1} \right) F(n-t),$$

with a base case of $F(1)=1$.

This evaluates to the expression $F(n) = \tau(n)$ and therefore shows that $D$ incurs an expected cost of at least $\tau(n)$.

We prove the following inductive hypothesis, which will complete the proof.

\textbf{Inductive Hypothesis.} For every graph with $n \geq 2$ and every probing order $\psi_1, \psi_2, \ldots, \psi_n$ over the random permutation for vertices $v_1, v_2, \ldots, v_n$ respectively, the performance is given by the recurrence above.

The base case is when $n=2$. In this case note that both $v_1$ and $v_2$ have an unique probing order over the $1$-cost edges. We can verify that the performance is given by the above recurrence in this case.

Now suppose for a given $k$, we have proved the inductive hypothesis for all $2 \leq n \leq k-1$ and all probing sequences $\psi_1, \psi_2, \ldots, \psi_n$. Consider the case when $n=k$ and an arbitrary probing sequences $\psi_1, \psi_2, \ldots, \psi_n$. Consider some $1 \leq t \leq n-1$ when the vertex $v_n$ comes. In the randomized algorithm, the probability that $v_n$ chooses a the "blocking" neighbor is given by $1-\frac{1}{n-t+1}$. Also note that only the $n-t$ vertices in $V$ have their relative ordering in $\pi$ fixed. Therefore the reduced instance is now on the $n-t$ vertices with the relative ordering of *none* of the vertices in $\pi$ fixed. Since we have proved by induction that for a graph of size $2, 3, \ldots, n-1$ and any arbitrary order of probing $\psi_1, \psi_2, \ldots, \psi_{n-1}$, the recurrence holds, therefore the reduced instance has a performance $F(n-t)$.

Note that this above proof implies that after choosing a permutation $\pi$, we can wlog assume that every vertex $v_i$ chooses the tie-breaking in the same order according to $\pi$, for this instance.

\section{Conclusion}
We study the randomized greedy for \ombm problem under the uniform metric and \ro. In particular, we give an exact formula for the competitive ratio achieved by \rg and prove its optimality. We find that  when applying \rg to \ombm with the uniform metric, the assumption of \ro does not offer any extra algorithmic power compared with that of \ao. 
 
  \section*{Acknowledgement}
  The third author Pan Xu is partially supported by NSF CRII Award IIS-1948157.


\printcredits

\bibliographystyle{cas-model2-names}

\bibliography{mybibfile}

\end{document}